\newtheorem{theorem}{Theorem}
\newtheorem{proposition}{Proposition}
\newtheorem{corollary}{Corollary}
\newtheorem{example}{Example}
\renewcommand{\phi}{\varphi}
\newcommand{\diag}{\operatorname{diag}}
\newcommand{\bx}{\boldsymbol{x}}
\newcommand{\bZ}{\boldsymbol{Z}}
\newcommand{\bz}{\boldsymbol{z}}
\newcommand{\bP}{\boldsymbol{P}}
\newcommand{\bD}{\boldsymbol{D}}
\newcommand{\sX}{\mathcal{X}}
\newcommand{\sY}{\mathcal{Y}}
\newcommand{\bSigma}{\boldsymbol\Sigma}
\def\reals{\mathbb{R}}
\def\bx{\boldsymbol{x}}
\def\b0{\mathbf{0}}
\def\bP{\boldsymbol{P}}
\def\bSigma{\boldsymbol\Sigma}
\def\bC{\boldsymbol{C}}
\def\bA{\boldsymbol{A}}
\def\bI{\mathbf{I}}
\def\rmV{\mathrm{V}}
\def\PSD{\mathrm{S_{++}}}
\newcommand{\di}{{\,\mathrm{d}}}
\renewcommand{\(}{\left(}
\renewcommand{\)}{\right)}
\title{Multivariate Generalized Gaussian Distribution: Convexity and Graphical Models}
\author{Teng Zhang, Ami Wiesel and Maria Sabrina Greco\thanks{Copyright (c) 2012 IEEE. Personal use of this material is permitted. However, permission to use this material for any other purposes must be obtained from the IEEE by sending a request to pubs-permissions@ieee.org.

Teng Zhang is with the University of Minnesota, Institute for Mathematics and its Applications, USA. Ami Wiesel is with The Hebrew University of Jerusalem, Israel. Maria Sabrina Greco is with the Universita di Pisa, Italy. Emails: zhang620@umn.edu, amiw@cs.huji.ac.il, and m.greco@iet.unipi.it. A. Wiesel was partially supported by the Intel Collaboration Research Institute for Computational Intelligence.}}
\begin{document}
\maketitle

\begin{abstract}
We consider covariance estimation in the multivariate
generalized Gaussian distribution (MGGD) and elliptically symmetric (ES) distribution. The maximum likelihood
optimization associated with this problem is non-convex, yet it
has been proved that its global solution can be often
computed via simple fixed point iterations. Our first contribution is a new analysis of this likelihood based on geodesic convexity that requires weaker assumptions. Our second contribution is a generalized framework for structured covariance estimation under sparsity constraints. We show that the optimizations can be formulated as convex minimization as long the MGGD shape parameter is larger than half and  the sparsity pattern is chordal. These include, for example, maximum likelihood estimation of banded inverse covariances in multivariate Laplace distributions, which are associated with time varying autoregressive processes.
\end{abstract}

\begin{keywords}
 Multivariate generalized Gaussian distribution, geodesic convexity, graphical models, Cholesky decomposition.
\end{keywords}
\section{Introduction}

Covariance estimation  is a fundamental problem in multivariate statistics. Many techniques for hypothesis testing, inference, denoising and prediction rely on accurate estimation of the true covariance. The problem is challenging when the available data is high dimensional and non-Gaussian. Such settings are typical in many applications including speech, radar, wireless communication, finance and more. These led to a growing interest in both robust and structured covariance estimation. Specifically, in this paper, we consider maximum likelihood estimation (MLE) in the multivariate generalized Gaussian distribution, with and without sparsity constraints on the inverse covariance.

The first part of this paper considers the geodesic convexity in MLE in elliptically symmetric (ES) distributions. Methods for robust covariance estimation date back to the early works of \cite{huberrobust,Tyler1987}. A popular approach is Tyler's scatter estimate. It involves a non-convex optimization yet can be solved via simple fixed point iteration. It has been rigorously analyzed and successfully applied to different problems \cite{Tyler1987,pascal2008covariance}. Recently, it was shown that the result is in fact the solution to a geodesically convex minimization \cite{Auderset2005,Chen2011,Wiesel2012LSE,Wiesel2012,zhang2012robust}. Geodesic convexity ensures that any local minima is also globally optimal and leads to a much simpler analysis \cite{Rapcsak1991}. It also allows for numerous extensions, e.g., regularized solutions. In a competing line of works, a different class of robust covariance estimation techniques was proposed based on the MGGD \cite{Novey2010,Chmielewski1981}. A well known example of MGGD is the multivariate Laplace distribution \cite{eltoft2006multivariate}. Fixed point iterations for MGGD estimation and their analyses has recently been considered in \cite{Ollila2012,bombrun2012performance,Pascal2012}. The first contribution in this paper is a new analysis which shows that the negative log-likelihood in MGGD is also geodesically convex. This result requires weaker conditions than  previous analyses, provides more intuition and paves the road to numerous generalizations.

The second part of this paper addresses structured covariance estimation. Structure exploitation is a main ingredient in modern statistics that allows accurate high dimensional estimation via a small number of samples. A promising approach is based on sparse inverse covariance models. In the multivariate Gaussian case these are known as graphical models and characterize conditional independence \cite{dempster1972covariance,lauritzen1996graphical,Bickel2008,andersen1995linear}. These models have been successfully applied to  speech recognition, sensor networks, computer networks and other fields in signal processing  \cite{Chen2011,wiesel2009decomposable}. Our goal is to combine such models with non-Gaussian distributions, e.g., MGGD. Recently, a similar problem was addressed using an expectation maximization technique \cite{finegold2011robust}. Another line of work focused on combining Tyler's scatter estimate with a banded inverse covariance prior \cite{abramovich2012expected,Abramovich2012}. Graphical models for transelliptical distributions were discussed in \cite{liu2012transelliptical}. In this paper, we combine the MGGD framework with prior sparsity constraints on the inverse covariance. We show that the optimization can be formulated into a convex form as long as the MGGD scale parameter is larger than half and the sparsity satisfies a chordal structure. Chordal models, also known as decomposable or triangulated models, include banded structures, multiscale settings and other practical scenarios \cite{Wiesel_decomposable2010,Fukuda99exploitingsparsity,Dahl2008}. Such structures are associated with a perfect ordering of the variables. A typical example is banded models associated with time-varying autoregressive processes \cite{abramovich2007time}.

Our results are also applicable to the case of unknown sparsity pattern, i.e., structure learning via sparsity inducing penalties, but require prior knowledge of the perfect order. Recent works on structure learning in directed acyclic graphs provide data driven techniques for learning this order \cite{rutimann2009high,rolfs2012natural}.

The outline of the paper is as follows. We begin in Section \ref{prel} with a few mathematical preliminaries that will be useful in our work. Then, we continue to our two main contributions. In Section \ref{gcovnex_mggd} we provide a new geodesic analysis of MGGD estimation, and in Section \ref{convex_mggd} we introduce a convex optimization framework for chordal structured MGGD estimation. Simulation results are described in Section \ref{simulations}, and concluding remarks are offered in Section \ref{conclusion}.

We use the following notations. We denote the set of real, symmetric and positive definite matrices by $\PSD(p)\subset\mathbb{R}^{p\times p}$. We denote the span operator by $\mathrm{sp}\{\cdot\}$.

\section{Preliminaries}\label{prel}
We begin with a brief review of two mathematical concepts which will be instrumental in the next sections.

\subsection{Geodesic convexity}
Geodesic convexity is an extension of classical convexity which replaces lines with geodesic paths in manifolds. More details on this topic can be found in \cite{Rapcsak1991}. Given a Riemannian manifold $\mathcal{M}$ and a set $\mathcal{A}\subset \mathcal{M}$, we say a function $f: \mathcal{A}\rightarrow \reals$ is geodesically convex, if every geodesic $\gamma_{xy}$ of $\mathcal{M}$ with endpoints $x,y\in\mathcal{A}$ (i.e., $\gamma_{xy}$ is a function from $[0,1]$ to $\mathcal{M}$ with $\gamma_{xy}(0)=x$ and $\gamma_{xy}(1)=y$) lies in $\mathcal{A}$, and
  \begin{eqnarray}
  &  \text{$f\(\gamma_{xy}(t)\)\leq (1-t)f(x)+t f(y)$}\nonumber\\
  &\text{ for any $x,y\in\mathcal{A}$ and $0< t< 1$}\label{eq:convex_defi}.\end{eqnarray}
If the inequality in~\eqref{eq:convex_defi} is replaced by strict inequality, we call the function $f$ geodesically strictly convex. An equivalent definition follows  from \cite[Theorem 1.1.4]{niculescu2006convex}.
\begin{proposition}\label{prop:mean}
For continuous function $f$, the definition in \eqref{eq:convex_defi} is equivalent to the condition
  \begin{eqnarray}
  &\text{$f\(\gamma_{xy}\(\frac{1}{2}\)\)\leq \frac{1}{2}f(x)+\frac{1}{2} f(y)$}\nonumber\\
  &\text{for any $x,y\in\mathcal{A}$}\label{eq:convex_defi1}.\end{eqnarray}
\end{proposition}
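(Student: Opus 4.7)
The forward direction is immediate: specializing \eqref{eq:convex_defi} to $t=1/2$ yields \eqref{eq:convex_defi1}, so the real content is the converse. My plan is to run the classical ``midpoint convexity plus continuity implies convexity'' argument, transplanted from $\mathbb{R}$ to the manifold by exploiting the fact that a geodesic restricted to a subinterval is itself a geodesic (up to affine reparameterization of $[0,1]$).

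The first step is to record this reparameterization property: for $0 \le s < t \le 1$, the curve $\sigma(u) \defeq \gamma_{xy}((1-u)s + ut)$ is the geodesic of $\mathcal{M}$ joining $\gamma_{xy}(s)$ and $\gamma_{xy}(t)$. Applying the hypothesis \eqref{eq:convex_defi1} to $\sigma$ at $u=1/2$ then gives the ``midpoint-along-the-geodesic'' inequality
\begin{equation*}
f\!\left(\gamma_{xy}\!\left(\tfrac{s+t}{2}\right)\right) \;\le\; \tfrac{1}{2} f(\gamma_{xy}(s)) + \tfrac{1}{2} f(\gamma_{xy}(t))
\end{equation*}
for any $s,t\in[0,1]$. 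Setting $h(t) \defeq f(\gamma_{xy}(t))$ and $g(t) \defeq (1-t)f(x) + t f(y)$, this reads $h((s+t)/2) \le (h(s)+h(t))/2$.

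Next I would show by induction on $n$ that $h(k/2^n) \le g(k/2^n)$ for every $n \ge 0$ and every $k \in \{0,1,\dots,2^n\}$. The base case $n=0$ is just $h(0)=f(x)=g(0)$ and $h(1)=f(y)=g(1)$. For the inductive step, any odd $k$ at level $n+1$ yields $k/2^{n+1} = \tfrac{1}{2}\!\left((k-1)/2^{n+1} + (k+1)/2^{n+1}\right)$, where both endpoints are dyadic rationals at level $n$; combining the midpoint inequality above with the inductive bound on each endpoint, and using that $g$ is affine so $\tfrac12 g(s_1)+\tfrac12 g(s_2) = g(\tfrac{s_1+s_2}{2})$, delivers the desired estimate at level $n+1$.

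Finally, to pass from dyadic $t$ to arbitrary $t\in(0,1)$, I would invoke continuity: every $t\in[0,1]$ is a limit of dyadic rationals $t_n$, the curve $\gamma_{xy}$ is continuous on $[0,1]$ by definition, and $f$ is continuous by hypothesis, so $h(t_n)\to h(t)$; since $g$ is continuous too, passing to the limit in $h(t_n)\le g(t_n)$ yields \eqref{eq:convex_defi}. The only nontrivial ingredient is the reparameterization property of geodesics, which is a standard fact in Riemannian geometry; once that is in hand, the rest is the classical dyadic-density argument and the remaining steps are routine.
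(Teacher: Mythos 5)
Your proof is correct and takes essentially the approach the paper itself relies on: the paper offers no proof, deferring to \cite[Theorem 1.1.4]{niculescu2006convex}, which is exactly the classical ``midpoint convexity plus continuity implies convexity'' result, and your transplantation to geodesics via the affine reparameterization property together with the dyadic induction and density argument is the standard and valid way to carry it out. The only implicit point is that invoking the midpoint hypothesis along the subsegment requires the intermediate points $\gamma_{xy}(s),\gamma_{xy}(t)$ to lie in $\mathcal{A}$, which is harmless here since in the paper's application $\mathcal{A}=\PSD(p)$ is the entire manifold.
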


The importance of geodesic convexity stems from the following properties (see \cite[Theorem 2.1]{Rapcsak1991} for more details).
\begin{proposition}
Any local minimizer of a geodesically convex function is also its global minimizer.
\end{proposition}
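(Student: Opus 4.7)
The plan is to argue by contradiction, mirroring the familiar Euclidean proof but substituting geodesics for straight-line segments. Suppose, for contradiction, that $x^{*}\in\mathcal{A}$ is a local minimizer of $f$ that is not a global minimizer. Then there exists some $y\in\mathcal{A}$ with $f(y)<f(x^{*})$. By the definition of geodesic convexity, the geodesic $\gamma_{x^{*}y}:[0,1]\to\mathcal{M}$ connecting $x^{*}$ and $y$ lies entirely in $\mathcal{A}$, so the points $\gamma_{x^{*}y}(t)$ are legitimate candidates at which to evaluate $f$.

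Next I would exploit the inequality \eqref{eq:convex_defi} directly: for every $t\in(0,1)$,
\begin{equation*}
f\bigl(\gamma_{x^{*}y}(t)\bigr)\;\leq\;(1-t)f(x^{*})+tf(y)\;=\;f(x^{*})-t\bigl(f(x^{*})-f(y)\bigr)\;<\;f(x^{*)}.
\end{equation*}
So along the entire geodesic the function value is strictly smaller than $f(x^{*})$. The remaining ingredient is that $x^{*}$ being a local minimizer means there is some neighborhood $U\subset\mathcal{A}$ of $x^{*}$ on which $f(x^{*})\leq f(x)$ for all $x\in U$. Since $\gamma_{x^{*}y}$ is a continuous map from $[0,1]$ into $\mathcal{M}$ with $\gamma_{x^{*}y}(0)=x^{*}$, there exists $\delta>0$ such that $\gamma_{x^{*}y}(t)\in U$ for all $t\in[0,\delta)$. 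Choosing any such $t\in(0,\delta)$ produces a point in $U$ with strictly smaller value of $f$, contradicting local minimality.

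There is no serious obstacle here; the result is essentially a restatement of the classical convex-analysis argument, and the only point requiring care is noting that the geodesic is contained in $\mathcal{A}$ (provided by the definition of geodesic convexity) and that it enters every neighborhood of $x^{*}$ (provided by continuity of $\gamma_{x^{*}y}$ at $0$). If one instead wanted to build the proof from the midpoint formulation in Proposition \ref{prop:mean}, a slightly more delicate iteration would be required to produce values $t=2^{-k}$ approaching $0$, but the direct definition \eqref{eq:convex_defi} makes this unnecessary.
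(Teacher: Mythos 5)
Your proof is correct, and it is worth noting that the paper itself offers no proof of this proposition at all: it is quoted as a known fact with a pointer to \cite[Theorem 2.1]{Rapcsak1991}. Your contradiction argument is the standard one and would serve as a complete, self-contained justification, so in that sense you have supplied something the paper omits. Two minor points of care. First, you write ``the geodesic $\gamma_{x^*y}$,'' which tacitly assumes that a geodesic joining $x^*$ and $y$ exists; in a general Riemannian manifold geodesics between two points need be neither unique nor guaranteed to exist, and the paper's definition \eqref{eq:convex_defi} quantifies over \emph{every} geodesic with those endpoints, so you should say ``fix some geodesic $\gamma_{x^*y}$'' --- and observe that geodesic convexity as defined implicitly presupposes at least one exists (on the manifold the paper actually uses, $\PSD(p)$, existence and uniqueness are given explicitly by \eqref{eq:geodesic}, so nothing is lost there). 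Second, there is a brace typo in your display: $f(x^{*)}$ should read $f(x^{*})$. Your closing remark is also apt: working from the midpoint characterization in Proposition \ref{prop:mean} would require both a dyadic iteration $t = 2^{-k}$ and the continuity hypothesis on $f$ that Proposition \ref{prop:mean} carries, whereas the full definition \eqref{eq:convex_defi} needs neither --- so your choice to argue directly from \eqref{eq:convex_defi} yields the proposition in the generality in which it is stated, with continuity used only for the geodesic $\gamma_{x^*y}$, which holds by definition.
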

\begin{proposition}
Any strictly geodesically convex function has a unique global minimizer.
\end{proposition}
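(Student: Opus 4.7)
The plan is to argue by contradiction in the standard way, exploiting the strict version of the midpoint inequality in~\eqref{eq:convex_defi1}. Assume $f$ has two distinct global minimizers $x,y\in\mathcal{A}$, so that $f(x)=f(y)=m$ where $m=\inf_{z\in\mathcal{A}} f(z)$.

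First, I would invoke the definition of geodesic convexity of the domain: since $x,y\in\mathcal{A}$, the geodesic $\gamma_{xy}:[0,1]\to\mathcal{M}$ with $\gamma_{xy}(0)=x$ and $\gamma_{xy}(1)=y$ lies entirely in $\mathcal{A}$. In particular, the midpoint $\gamma_{xy}(\tfrac{1}{2})\in\mathcal{A}$ is a legitimate point to evaluate $f$ at.

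Next, apply strict geodesic convexity at $t=\tfrac{1}{2}$. Because $x\neq y$ the geodesic is non-degenerate, so the strict version of~\eqref{eq:convex_defi} gives
\[
f\!\left(\gamma_{xy}\!\left(\tfrac{1}{2}\right)\right) \;<\; \tfrac{1}{2} f(x) + \tfrac{1}{2} f(y) \;=\; m.
\]
This contradicts the fact that $m$ is the global minimum on $\mathcal{A}$. Hence no two distinct global minimizers can exist, which establishes uniqueness.

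I do not anticipate any real obstacle: the entire argument is a direct transcription of the classical Euclidean convex-analysis proof, with straight lines replaced by geodesics and with the midpoint reformulation (Proposition~\ref{prop:mean}) doing the work. The only point that requires a moment of care is making sure ``strict'' is being used at an interior parameter value $t\in(0,1)$, which is exactly where the strict inequality in the definition applies; the midpoint $t=\tfrac{1}{2}$ is the cleanest choice. Note that the proposition asserts uniqueness conditional on a minimizer existing — existence would require additional hypotheses (e.g., compactness of sublevel sets) and is not claimed here.
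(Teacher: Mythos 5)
Your proof is correct and is exactly the standard midpoint-contradiction argument; the paper states this proposition without proof (citing Rapcs\'ak's Theorem 2.1), and that is the same argument used there. The only implicit assumption worth flagging is that a geodesic joining the two putative minimizers actually exists --- automatic on $\PSD(p)$ via the explicit formula \eqref{eq:geodesic}, and on a general manifold guaranteed by completeness (Hopf--Rinow) --- and your reading that uniqueness is conditional on existence matches how the paper deploys the result, with existence established separately in Theorem~\ref{thm:mainn}(b).
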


In particular, we consider geodesic convexity on  the manifold of positive definite matrices denoted by $\PSD(p)$. The geodesic connecting $\bSigma_1\in\PSD(p)$ and $\bSigma_2\in\PSD(p)$ is defined as~\cite[Chapter 6]{bhatia2007positive}
\begin{eqnarray}\label{eq:geodesic}
\gamma_{\bSigma_1\bSigma_2}(t)=\bSigma_1^{\frac{1}{2}}\(\bSigma_1^{-\frac{1}{2}}\bSigma_2\bSigma_1^{-\frac{1}{2}}\)^t\bSigma_1^{\frac{1}{2}}.
\end{eqnarray}

Geodesic convexity of Tyler's likelihood has been identified in \cite{Auderset2005,Wiesel2012LSE,Wiesel2012,zhang2012robust}. In Section \ref{gcovnex_mggd}, we continue this line of works and show that the MGGD likelihood is also geodesically convex.

\subsection{Chordal graphs}
In statistical graphical modeling, graphs are used to characterize the sparsity pattern of the corresponding inverse covariance matrices. When the pattern belongs to a special class known as chordal graphs, these concentration matrices satisfy an appealing structure which will be exploited in Section \ref{convex_mggd}. More details on chordal graphs and their relation to graphical models can be found in \cite{lauritzen1996graphical,Wiesel_decomposable2010,Fukuda99exploitingsparsity,Dahl2008}.

A graph $G(V,E)$ is chordal if every cycle of length $\geq 4$ has an edge joining two nonconsecutive vertices of the cycle. For a chordal graph, there is a perfect elimination ordering of vertices, $(v_1, v_2, \cdots, v_n)$ such that for any $1\leq i\leq n$, the neighbor of $v_i$, $\mathrm{Adj}(v_i)=(u\in V: (u,v_i)\in E)$, satisfies that  $\mathrm{Adj}(v_i)\cap \{v_{i+1},v_{i+2},\cdots,v_{n}\}$ induces a fully connected clique, i.e., a set of fully connected nodes.

It is convenient to define the sparsity pattern of a square $n\times n$ matrix $\bC$ via a graph $G(V,E)$ with $n$ vertices.  We say that $\bC$ is $G$-sparse if
\begin{eqnarray}
 [\bC]_{ij}=0,\qquad \text{for all} \quad (i,j)\notin E.
\end{eqnarray}


A Cholesky decomposition is a generalization of the squared root operation to positive definite matrices. Any $\bSigma\in\PSD(p)$ has a unique decomposition $\bSigma=\bC\bC^T$ where the Cholesky factor $\bC$ is a lower-triangular matrix with positive diagonal elements.

The following result characterizes the relation between sparse Cholesky decompositions and chordal graphs.\begin{proposition}\label{prop:chol}
Let $G(V,E)$ be a chordal graph with a natural\footnote{If the order is not natural, this result holds by permuting the columns of the matrix.} perfect order, i.e., $v_i=i$ for $i=1,\cdots,n$. If $\bSigma$ is $G$-sparse, real and positive definite, then there exists a unique $G$-sparse lower triangular $\bC$ with positive definite diagonal entries such that $\bSigma=\bC\bC^T$. If $\bC$ is $G$-sparse and lower-triangular, then $\bC\bC^T$ is $G$-sparse and positive definite.
\end{proposition}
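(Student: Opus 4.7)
I'll proceed by induction on the dimension $n$, exploiting the natural perfect elimination ordering to decompose both $\bSigma$ and $\bC$ recursively through the first row and column. Existence and uniqueness of the Cholesky factorization for any positive definite matrix are classical, so in each direction the work reduces to tracking the sparsity pattern through the recursion. A key supporting fact, which I verify along the way, is that the induced subgraph $G' = G[\{2,\ldots,n\}]$ is itself chordal with the inherited natural perfect order $2,3,\ldots,n$.

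For the forward direction, I partition
\[
\bSigma = \begin{pmatrix} \sigma_{11} & \vct{b}^T \\ \vct{b} & \bSigma_{22} \end{pmatrix},
\]
so the first column of $\bC$ must equal $(\sqrt{\sigma_{11}},\vct{b}^T/\sqrt{\sigma_{11}})^T$ (inheriting $G$-sparsity directly from the first column of $\bSigma$), and the trailing block is a Cholesky factor of the Schur complement $\bSigma' = \bSigma_{22} - \vct{b}\vct{b}^T/\sigma_{11}$. For any off-diagonal position $(i,j)$ with $(i,j)\notin E$, the $G$-sparsity of $\bSigma$ gives $\bSigma_{ij}=0$, so $\bSigma'_{ij}$ can only be nonzero if both $\bSigma_{i,1}\neq 0$ and $\bSigma_{1,j}\neq 0$; but that would place $i$ and $j$ in $\mathrm{Adj}(1)\cap\{2,\ldots,n\}$, which by perfect elimination is a clique, contradicting $(i,j)\notin E$. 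Hence $\bSigma'$ is $G'$-sparse, and the induction hypothesis supplies the unique $G'$-sparse Cholesky factor that completes $\bC$.

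For the reverse direction, positive definiteness of $\bC\bC^T$ is immediate from the invertibility of $\bC$, implicit in the statement via a positive diagonal. For sparsity, fix $i>j$ with $(i,j)\notin E$ and expand
\[
[\bC\bC^T]_{ij} = \sum_{k=1}^{j} c_{ik}\,c_{jk}.
\]
A nonzero term at $k=j$ requires $c_{ij}\neq 0$, which by $G$-sparsity of $\bC$ forces $(i,j)\in E$, a contradiction. A nonzero term at some $k<j$ requires $(i,k),(j,k)\in E$, and the perfect elimination condition at vertex $k$ then places $i$ and $j$ in a common clique of later neighbors, again forcing $(i,j)\in E$. Every term therefore vanishes.

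The main technical obstacle is the chordal bookkeeping: verifying that no new fill appears in the Schur complement and that the residual graph inherits a natural perfect ordering. This is the classical \emph{no fill-in} property of chordal eliminations, and once it is in hand both directions of the proposition follow cleanly from the induction.
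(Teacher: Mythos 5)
Your proof is correct, but it takes a genuinely different route from the paper, because the paper does not actually prove this proposition: it cites \cite{Fukuda99exploitingsparsity} (Section 2.1) for existence, dismisses uniqueness as ``a direct consequence of the perfect elimination order,'' and only illustrates the converse direction in the banded example. Your argument is self-contained: the induction on the leading variable, with the Schur complement $\bSigma' = \bSigma_{22} - \vct{b}\vct{b}^T/\sigma_{11}$ and the observation that fill at a position $(i,j)\notin E$ would force $i,j\in\mathrm{Adj}(1)\cap\{2,\dots,n\}$ and hence $(i,j)\in E$ by the clique condition at vertex $1$, is precisely the classical no-fill-in property of chordal elimination that underlies the cited result, here made explicit. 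Your treatment of uniqueness is in fact cleaner than the paper's remark: since the $G$-sparse factor you construct is \emph{the} Cholesky factor, classical uniqueness among all lower-triangular matrices with positive diagonal immediately gives uniqueness among the $G$-sparse ones. One point you should spell out rather than merely assert: that $G'=G[\{2,\dots,n\}]$ inherits the natural perfect order, which the induction hypothesis requires. This is a one-liner --- for $i\geq 2$ deleting vertex $1$ leaves $\mathrm{Adj}(i)\cap\{i+1,\dots,n\}$ unchanged and cliques in $G$ among $\{2,\dots,n\}$ remain cliques in $G'$, and any graph with a perfect elimination ordering is chordal --- but it is load-bearing and currently implicit. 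Finally, your direct expansion $[\bC\bC^T]_{ij}=\sum_{k\leq j} c_{ik}c_{jk}$ with the clique condition applied at each $k<j$ supplies a complete proof of the last sentence of the proposition, which the paper leaves to the reader; note this converse needs no induction at all, as your argument shows. What the paper's approach buys is brevity and reliance on established sparse-matrix literature; what yours buys is a fully verifiable, elementary proof of both directions in one inductive framework.
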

The existence proof can be found in~\cite[Section 2.1]{Fukuda99exploitingsparsity}, and the uniqueness is a direct consequence of the perfect elimination order.

\begin{example}
For a banded matrix with band width $d$, it corresponds to a graph $G(V,E)$ defined as follows: $(i,j)\in E$ when $|i-j|< d$. Now we check that this is a chordal graph: for every cycle with length $\geq 4$, the indices of the four nodes differ by $d$ at most, therefore the edges connect all nodes in the cycle, which corresponds to the definition of chordal graph.

The elimination order for this chordal graph turns out to be the natural order $v_i=i$. Therefore Proposition~\ref{prop:chol} shows that the Cholesky decomposition of a banded matrix $\bSigma$ is the product of a banded (with the same bandwidth $d$) lower-triangular matrix with its transpose.

One can also easily show that the product of a banded lower-triangular matrix with its transpose is still a banded (with the same band width), positive definite matrix, which exemplifies the last sentence in Proposition~\ref{prop:chol}.
\end{example}
For more examples of Chordal graph and its associated perfect order, we recommend the examples and graphs in~\cite[Section II.A]{wiesel2009decomposable} and ~\cite[Section 3]{Dahl2008}.

\section{Geodesic convexity in MGGD}\label{gcovnex_mggd}

In this section, we consider unconstrained MLE in MGGD. More precisely, we address a more general family of  elliptically symmetric (ES) distributions (see \cite{Chmielewski1981} for a review and \cite{Ollila2012} for a recent generalization to complex case). These problems involve non-convex minimizations, yet it has been shown that their global minima can be efficiently found using simple fixed point iterations. We will show that the negative log-likelihoods are in fact geodesically convex, and that this may be the underlying principle behind their success.

An random variable $\bz\in\mathbb{R}^p$ has a ES distribution in real space if its probability density function (p.d.f.) is
\begin{equation}\label{eq:ces_def}
f(\bz)=C_{p,g}|\bSigma|^{-0.5}g\(\(\bz-\mu\)^T\bSigma^{-1}\(\bz-\mu\)\),
\end{equation}
where $g: [0,\infty)\rightarrow (0,\infty)$ such that $\int_0^\infty t^{p-1}g(t^2)\di t<\infty$,  $C_{p,g}$ is a normalization constant such that the integral of the distribution is $1$.
In~\eqref{eq:ces_def}, $\bSigma\in \PSD(p)$ is called a scatter matrix, and $\mu\in\mathbb{R}^p$ is the center of the distribution.
%

MGGD~\cite{Novey2010} is a widely used special case of ES when
\begin{eqnarray}
 g(x)=\exp\(-x^\beta/2\)
\end{eqnarray}
where $\beta$ is the shape parameter. In particular, for $\beta = 0.5$ it gives multivariate analog of Laplace distribution, and the multivariate Gaussian distribution is obtained for $\beta=1$. 

We consider the estimation of $\bSigma$ given $n$ independent and identically distributed (i.i.d) realizations of a zero mean ES random vector denoted by $\bz_1,\cdots,\bz_n$. The MLE is the parameter that minimizes the negative-log-likelihood
\[
L_0(\bSigma)=\sum_{i=1}^{n}\rho\(\bz_i^T\bSigma^{-1}\bz_i\)+\frac{n}{2}\log\det(\bSigma),
\]
where $\rho(x)=-\log g(x)$. The following Theorem~\ref{thm:mainn} characterizes the existence and uniqueness properties of this MLE. Theorem~\ref{thm:mainn}(a) characterizes the uniqueness and is the main contribution in this section;  Theorem~\ref{thm:mainn}(b) characterize the existence and is borrowed from~\cite[Theorem 2.1]{Tyler1987}.
 \begin{theorem}\label{thm:mainn}
(a) Assume that $\rho(x)$ is continuous in $(0,\infty)$, nondecreasing and $\rho(e^x)$ is convex,  then $L_0(\bSigma)$ is geodesically convex in $\PSD(p)$. If additionally $\mathrm{sp}\{\bz_1,\bz_2,\cdots,\bz_n\}=\mathbb{R}^p$, $\rho(x)$ is strictly increasing and $\rho(e^x)$ is strictly convex, then $L_0(\bSigma)$ is geodesically strictly convex in $\PSD(p)$.

(b) If $\rho(x)$ is continuous in $[0,\infty)$, $a_1=\sup\{a|x^{a/2}\exp(-\rho(x))\rightarrow 0\,\,\,\, \text{as $x\rightarrow \infty$}\}$ is positive, and  $\sX=\{\bz_i\}_{i=1}^n$, \begin{equation}\label{eq:existence} \text{ $\frac{|\sX\cap\rmV|}{n} < 1 - \frac{p-\dim(\rmV)}{a_1}$ for any linear subspace $\rmV\in\mathbb{R}^p$,}\end{equation}then there exists a minimizer of $L_0(\bSigma)$.
 \end{theorem}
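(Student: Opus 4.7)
\medskip
\noindent\textbf{Proof plan.} My plan is to show directly that the one--dimensional function $t\mapsto L_0(\gamma_{\bSigma_1\bSigma_2}(t))$ is convex on $[0,1]$ for arbitrary endpoints $\bSigma_1,\bSigma_2\in\PSD(p)$. The argument splits along the two summands in $L_0$.

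\medskip
\noindent\textbf{Linearity of the log-determinant term.} A direct computation using \eqref{eq:geodesic} gives $\log\det(\gamma_{\bSigma_1\bSigma_2}(t))=(1-t)\log\det(\bSigma_1)+t\log\det(\bSigma_2)$, so $\frac{n}{2}\log\det(\gamma(t))$ is affine in $t$ (never strict).

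\medskip
\noindent\textbf{Convexity of each data term.} Fix $i$ and set $\bA=\bSigma_1^{-1/2}\bSigma_2\bSigma_1^{-1/2}$ and $\bw_i=\bSigma_1^{-1/2}\bz_i$. Then along the geodesic
\[
\bz_i^T\gamma(t)^{-1}\bz_i \;=\; \bw_i^T \bA^{-t}\bw_i.
\]
Diagonalizing $\bA=\bU\mathrm{diag}(\lambda_1,\dots,\lambda_p)\bU^T$ and writing $c_{ij}=([\bU^T\bw_i]_j)^2\ge 0$, this becomes
\[
h_i(t)\;=\;\sum_{j=1}^{p} c_{ij}\,\lambda_j^{-t}\;=\;\sum_j c_{ij}\,\exp(-t\log\lambda_j).
\]
Each summand is log-affine in $t$, hence log-convex; and a sum of log-convex functions is log-convex (standard consequence of H\"older's inequality). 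Thus $g_i(t):=\log h_i(t)$ is convex in $t$. Now introduce $\phi(u):=\rho(e^u)$, which by hypothesis is convex and nondecreasing, so that $\rho(\bz_i^T\gamma(t)^{-1}\bz_i)=\phi(g_i(t))$. The composition rule ``convex nondecreasing $\circ$ convex is convex'' yields that this term is convex in $t$. Summing in $i$ and adding the affine log-det term produces convexity of $L_0\circ\gamma$, which is exactly geodesic convexity of $L_0$ on $\PSD(p)$.

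\medskip
\noindent\textbf{Strict convexity.} Under the strengthened hypotheses, I would argue as follows. Suppose $\bSigma_1\ne\bSigma_2$ and $\mathrm{sp}\{\bz_1,\dots,\bz_n\}=\mathbb{R}^p$. Since the quadratic form $\bu\mapsto \bu^T(\bSigma_1^{-1}-\bSigma_2^{-1})\bu$ vanishes on a spanning set iff $\bSigma_1^{-1}=\bSigma_2^{-1}$, there exists an index $i$ with $\bz_i^T\bSigma_1^{-1}\bz_i\neq \bz_i^T\bSigma_2^{-1}\bz_i$, i.e.\ $g_i(0)\neq g_i(1)$. Then, using convexity of $g_i$ and monotonicity of $\phi$, and then strict convexity of $\phi$,
\[
\phi\bigl(g_i((1-t)\cdot 0+t\cdot 1)\bigr)\;\le\;\phi\bigl((1-t)g_i(0)+t g_i(1)\bigr)\;<\;(1-t)\phi(g_i(0))+t\phi(g_i(1))
\]
for every $t\in(0,1)$, with the second inequality strict precisely because $g_i(0)\neq g_i(1)$ and $\phi$ is strictly convex. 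This $i$-th term is therefore strictly convex in $t$; combined with convexity of the remaining terms and linearity of the log-det term, $L_0\circ\gamma$ is strictly convex, proving geodesic strict convexity of $L_0$.

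\medskip
\noindent\textbf{Where the work lies.} The routine pieces are the affine behavior of $\log\det$ along \eqref{eq:geodesic} and the composition rule. The substantive step is the log-convexity of $h_i(t)=\bw_i^T\bA^{-t}\bw_i$ as a function of $t$; once that is in hand, the hypothesis ``$\rho(e^x)$ convex'' is exactly what is needed to ride this log-convexity through $\rho$. The strict case further requires the spanning-set argument to transfer $\bSigma_1\ne\bSigma_2$ into a strict separation $g_i(0)\ne g_i(1)$ for at least one data point, which I expect to be the only delicate bookkeeping in the proof.
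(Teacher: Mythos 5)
Your proof of plain geodesic convexity in part (a) is correct, and it is essentially the paper's argument made self-contained: the log-convexity of $t\mapsto \bz_i^T\gamma(t)^{-1}\bz_i$, which you obtain by diagonalizing and writing $h_i(t)=\sum_j c_{ij}\lambda_j^{-t}$, is exactly the full-segment version of the midpoint inequality \eqref{eq:convexity30} that the paper imports from Wiesel's Proposition 1; both proofs then compose with the nondecreasing convex $\phi(u)=\rho(e^u)$ and add the affine (along \eqref{eq:geodesic}) log-determinant term. Your route has the small advantage of not needing the continuity-based midpoint-to-full-convexity step (Proposition~\ref{prop:mean}), since you prove convexity along the entire geodesic directly. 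You say nothing about part (b), but neither does the paper --- it quotes Tyler's existence theorem --- so that omission is consistent with the source, though you should have flagged it.

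The strict convexity argument, however, has a genuine gap. You claim that $\mathrm{sp}\{\bz_1,\dots,\bz_n\}=\mathbb{R}^p$ and $\bSigma_1\neq\bSigma_2$ force some $i$ with $\bz_i^T\bSigma_1^{-1}\bz_i\neq\bz_i^T\bSigma_2^{-1}\bz_i$, on the grounds that a quadratic form vanishing on a spanning set must vanish identically. That is false for indefinite symmetric matrices, and $\bSigma_1^{-1}-\bSigma_2^{-1}$ is in general indefinite: take $p=2$, $\bz_1=(1,1)^T$, $\bz_2=(1,-1)^T$ (a spanning set), $\bSigma_1^{-1}=\diag(2,1)$, $\bSigma_2^{-1}=\diag(1,2)$; then $\bz_i^T\bSigma_1^{-1}\bz_i=\bz_i^T\bSigma_2^{-1}\bz_i=3$ for both $i$, so $g_i(0)=g_i(1)$ for all $i$ and your displayed chain degenerates to equalities, certifying nothing for this pair of endpoints --- yet geodesic strict convexity must be verified for \emph{every} pair. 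Tellingly, you never use the hypothesis that $\rho$ is strictly increasing, and that hypothesis is needed precisely in the case your argument misses: in the example above each $g_i$ is strictly convex (the eigenvalues $1,2$ of $\bA$ are distinct and both carry positive weight), and strictness enters through the \emph{first} inequality of your chain, via $g_i(t)<(1-t)g_i(0)+t\,g_i(1)$ pushed through the strictly increasing $\phi$. The correct argument is a dichotomy: either some $g_i$ is strictly convex along $\gamma$, and strict monotonicity of $\rho$ yields strictness; or every $g_i$ is affine, which means each $\bSigma_1^{-1/2}\bz_i$ lies in a single eigenspace of $\bA=\bSigma_1^{-1/2}\bSigma_2\bSigma_1^{-1/2}$ --- and then not all $g_i$ can be constant (otherwise the spanning condition forces $\bA=\Id$, i.e.\ $\bSigma_1=\bSigma_2$), so some $g_i$ is affine and nonconstant and the strict convexity of $\phi$ applies, which is your second inequality. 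The paper runs this same case analysis in midpoint form: it cites the equality characterization of \eqref{eq:convexity30} from Zhang--Wiesel--Greco (equality for all $i$ under the spanning assumption forces $\bSigma_1=c\bSigma_2$, $c\neq1$), and in that residual case invokes strict convexity of $\rho(e^x)$ in \eqref{eq:convexity31}. Your proof collapses this dichotomy into the single (false) claim that the endpoint values must separate, so as written the strict part does not stand.
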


Before proving this theorem, a few comments are in order. We remark that the condition $\mathrm{sp}\{\bz_1,\bz_2,\cdots,\bz_n\}=\mathbb{R}^p$ is also implicated by~\eqref{eq:existence}, since otherwise  $\rmV=\mathrm{sp}\{\bz_1,\bz_2,\cdots,\bz_n\}$ violated the assumption. This condition is necessary since otherwise $L_0(\bP_{\rmV})\rightarrow -\infty$ as $\bSigma\rightarrow \bP_{\rmV}$, where $\bP_{\rmV}$ is the projector to subspace $\rmV$) and therefore the minimizer of $L_0(\bSigma)$ does not exist.

 Theorem \ref{thm:mainn} relaxes the conditions for the uniqueness/existence of the minimizer of $L_0(\bSigma)$ presented in~\cite[Theorem 2.2]{Kent1991}:\\
 \textbf{M1} $\rho'(x)$ is non-negative, continuous and nonincreasing.\\
 \textbf{M2} $x\rho'(x)$ is strictly increasing.\\
 \textbf{M3} condition~\eqref{eq:existence}.\\
Our assumption that $\rho$ is increasing corresponds to $\rho'(x)>0$, which follows from \textbf{M1}, where $\rho'$ is nonnegative and \textbf{M2} (which exclude the possibility that $\rho'(x)=0$ for $x>0$); and our assumption that $\rho(e^x)$ is strictly convex corresponds to \textbf{M2}, where $x\rho'(x)$ is strictly increasing. In comparison, our conditions does not require $\rho$ to be differentiable, and we do not assume their assumptions in  \textbf{M1} that $\rho'(x)$ is continuous or nonincreasing.


A special case of Theorem~\ref{thm:mainn} is Tyler's M-estimator in which $\rho(x)=\log(x)/2p$. Following the first part of Theorem~\ref{thm:mainn}(a), $L_0(\bSigma)$ is geodesically convex, which has been previously identified in \cite{Auderset2005,Wiesel2012LSE,Wiesel2012,zhang2012robust}. The geodesic convexity does not contradict the non-uniqueness of Tyler's M-estimator since this convexity is not strict.  Another special case is the class of MGGD estimators. The following corollary follows from Theorem~\ref{thm:mainn} and the fact that $a_1$ in Theorem~\ref{thm:mainn}(b) is $\infty$.
 \begin{corollary}
For all $\beta>0$, the ML estimator for MGGD exists and unique if $\mathrm{sp}\{\bz_1,\bz_2,\cdots,\bz_n\}=\mathbb{R}^p$.
 \end{corollary}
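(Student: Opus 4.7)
The plan is to derive the corollary by directly applying Theorem~\ref{thm:mainn} to the specific loss function arising from the MGGD. For the MGGD with $g(x)=\exp(-x^\beta/2)$, one has $\rho(x)=-\log g(x)=x^\beta/2$, which is the object whose analytic properties I need to verify against the hypotheses of parts (a) and (b) of the theorem.

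First I would check the uniqueness side, i.e.\ Theorem~\ref{thm:mainn}(a). For $\beta>0$, the map $x\mapsto x^\beta/2$ is continuous on $(0,\infty)$ and strictly increasing, so the monotonicity assumptions are immediate. The composition $\rho(e^x)=e^{\beta x}/2$ is strictly convex since $\beta>0$, so the convexity assumption on $\rho\circ\exp$ holds strictly. Combined with the spanning assumption $\mathrm{sp}\{\bz_1,\dots,\bz_n\}=\mathbb{R}^p$ already in the corollary, part (a) will yield strict geodesic convexity of $L_0(\bSigma)$ on $\PSD(p)$, and hence uniqueness of the minimizer whenever it exists, by the proposition stating that a strictly geodesically convex function has at most one global minimizer.

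Second I would check the existence side, i.e.\ Theorem~\ref{thm:mainn}(b). The function $\rho(x)=x^\beta/2$ is continuous on $[0,\infty)$ with $\rho(0)=0$. The key observation, which is also what the corollary's statement relies on, is that for every $a\ge 0$ the quantity $x^{a/2}\exp(-x^\beta/2)$ tends to $0$ as $x\to\infty$, because the stretched exponential dominates any polynomial when $\beta>0$; hence $a_1=\sup\{a:\,x^{a/2}\exp(-\rho(x))\to 0\}=\infty$. With $a_1=\infty$, the inequality \eqref{eq:existence} collapses to $|\sX\cap\rmV|/n<1$ for every proper linear subspace $\rmV$, which is exactly the statement that no proper subspace contains all of $\bz_1,\ldots,\bz_n$; this is in turn equivalent to $\mathrm{sp}\{\bz_1,\ldots,\bz_n\}=\mathbb{R}^p$. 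Therefore the spanning assumption implies the hypothesis of (b) and produces a minimizer.

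Combining the two halves yields existence (from (b)) and uniqueness (from (a)) of the MLE, establishing the corollary. There is essentially no obstacle here — the argument is a routine verification — and the only subtlety worth flagging in writing is that the condition \eqref{eq:existence} is vacuous (beyond spanning) precisely because $\rho(x)=x^\beta/2$ grows faster than any polynomial's logarithm, which is why the MGGD case is so much cleaner than, say, Tyler's estimator where $a_1=p$ forces genuine sample-complexity constraints.
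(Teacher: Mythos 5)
Your proof is correct and is essentially the paper's own argument: the paper justifies the corollary in one line by noting that for $\rho(x)=x^{\beta}/2$ the exponent $a_1$ in Theorem~\ref{thm:mainn}(b) equals $\infty$, so that \eqref{eq:existence} collapses to the spanning condition, while Theorem~\ref{thm:mainn}(a) applies because $\rho$ is continuous and strictly increasing and $\rho(e^x)=e^{\beta x}/2$ is strictly convex. Your write-up merely makes these verifications explicit (including the correct reading of \eqref{eq:existence} as a condition over proper subspaces $\rmV$), so there is no substantive difference from the paper's proof.
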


 Existence and uniqueness of the MLE in MGGD has been previously addressed in \cite[Section V.A]{Ollila2012}. However, this contribution does not identify geodesic convexity and applies only to $0<\beta\leq 1$. Our theorem applies to MGGD with all $\beta>0$ and provides additional insight based on geodesic convexity. Furthermore, it applies to other ES distributions, including cases where $\rho'(x)$ is not continuous, e.g., $\rho(x)=x$ when $x\leq 1$ and $\rho(x)=3x-2$ when $x> 1$.


Here we remark that although ML estimator for elliptically symmetric (ES) distribution is the motivation of the argument, Theorem~\ref{thm:mainn} (and Theorem~\ref{thm:banded} in next section) hold even if $\rho(x)$ does not relate to an elliptically symmetric (ES) distribution. For example, Tyler's M-estimator can be written in the form of~\eqref{eq:ces_def} with $\rho(x)=\log(x)/2p$; however this $\rho(x)$ corresponds to the central angular Gaussian which is not a member in ES class~\cite[(36)]{Ollila2012}. Another example is in~\cite[page 5610, Example 1]{Ollila2012}, where $\rho'$ is set to be a huber function.

We remark that the  Theorem \ref{thm:mainn} also applied to the complex  elliptically symmetric (CES) distributions defined by
\begin{equation}
f(\bz)=C_{p,g}|\bSigma|^{-1}g\(\(\bz-\mu\)^H\bSigma^{-1}\(\bz-\mu\)\),
\end{equation}
where the MLE estimator minimizes
\[
L_0(\bSigma)=\sum_{i=1}^{n}\rho\(\bz_i^H\bSigma^{-1}\bz_i\)+n\log\det(\bSigma).
\]
Then Theorem \ref{thm:mainn} holds with  $a_1=\sup\{a|x^{a}\exp(-\rho(x))\rightarrow 0\,\,\,\, \text{as $x\rightarrow \infty$}\}$. This is also a generalization of the uniqueness/existence of the minimizer of $L_0(\bSigma)$ presented in~\cite[Section V.A]{Ollila2012}.

 \begin{proof}
(a) Applying ~\cite[Proposition 1]{Wiesel2012}, if $\bSigma_3$ is the geodesic mean of $\bSigma_1$ and $\bSigma_2$ defined in~\eqref{eq:geodesic}, then
\begin{equation}\label{eq:convexity30}
\ln\(\bz^T\bSigma_1^{-1}\bz\)+\ln\(\bz^T\bSigma_2^{-1}\bz\)\geq 2\ln\(\bz^T\bSigma_3^{-1}\bz\).
\end{equation}
Combining this fact with the convex/monotone properties of $\rho$, we have
\begin{align}
&\rho\(\bz_i^T\bSigma_1^{-1}\bz_i\)+\rho\(\bz_i^T\bSigma_2^{-1}\bz_i\)
\nonumber\\\geq& 2\rho\(\exp\(\(\ln\(\bz_i^T\bSigma_1^{-1}\bz_i\)+\ln\(\bz_i^T\bSigma_2^{-1}\bz_i\)\)/2\)\)
\label{eq:convexity31}\\\geq &2\rho\(\exp\(\ln\(\bz_i^T\bSigma_3^{-1}\bz_i\)\)\)
= 2\rho\(\bz_i^T\bSigma_3^{-1}\bz_i\),\label{eq:convexity3}
\end{align}
Since
\begin{eqnarray}
 \log\det\(\bSigma_1\)+\log\det\(\bSigma_2\)=2\log\det\(\bSigma_3\),
\end{eqnarray}
we proved
\begin{eqnarray}
 L_0\(\bSigma_1\)+L_0\(\bSigma_2\)\geq 2L_0\(\bSigma_3\)
\end{eqnarray}
By Proposition~\ref{prop:mean}, we proved the geodesic convexity of $L_0(\bSigma)$.

 When $\rho(e^x)$ is strictly convex, $L_0(\bSigma)$ is geodesically strictly convex since the equalities in \eqref{eq:convexity31} and \eqref{eq:convexity3} can not hold simultaneously for all $1\leq i\leq n$. The proof is as follows. Following the proof of~\cite[Theorem III.1]{zhang2012robust}, when $\mathrm{sp}\{\bz_1,\bz_2,\cdots,\bz_n\}=\mathbb{R}^p$, the equality in \eqref{eq:convexity30} (and therefore the equality in~\eqref{eq:convexity3}) holds for all $1\leq i\leq n$ only when $\bSigma_1=c\bSigma_2$ for some $c\neq 1$. However, $\bSigma_1=c\bSigma_2$ would fail the equality in \eqref{eq:convexity31} due to the strict convexity of $\rho(e^x)$.
 \end{proof}

In practice, various numerical techniques can be used to find a local minima of the MGGD negative log-likelihood.  Theorem \ref{thm:mainn} and geodesic convexity then ensure that this local minima will also be global minima. A promising approach is the classical iterative reweighed scheme due to~\cite{Kent1991,Arslan2004}:
\begin{equation}\label{eq:mestimator_covariance}
\bSigma_{m+1}=\sum_{i=1}^{n}u(\bz_i^T\bSigma_m^{-1}\bz_i)\bz_i\bz_i^T/n,
\end{equation}
where $u(x)=\rho'(x)$. It has been shown in~\cite[Proposition 1]{Arslan2004} that when $\mathrm{Sp}(\bz_1,\bz_2,\cdots,\bz_n)=\mathbb{R}^p$, $\rho''(x)$ is continuous and nonnegative, then $L_0(\bSigma_m)$ decreases monotonically ($\mathrm{Sp}(\bz_1,\bz_2,\cdots,\bz_n)=\mathbb{R}^p$ is assumed in order to make sure that $L_0(\bSigma_{m+1})$ is well-defined). \cite[Proposition 3]{Arslan2004} shows that any limiting point of the sequence $\bSigma_m$ is a stationary point. When the assumptions in Theorem~\ref{thm:mainn} hold, this point is the unique minimizer of $L_0(\bSigma)$.


\section{Convexity in chordal MGGD}\label{convex_mggd}
In this section, we consider structured MGGD estimation. In particular, we consider MLE of the MGGD scatter matrix subject to sparsity constraints. Thus, we are interested in the solution to
\begin{eqnarray}\label{eq:ggm}
\begin{array}{ll}
  \min_{\bSigma} & \sum_{i=1}^{n}\rho\(\bz_i^T\bSigma^{-1}\bz_i\)+n\log\det(\bSigma)\\
 {\mathrm{s.t.}} & \left[\bSigma^{-1}\right]_{ij}=0,\quad (i,j)\notin E
\end{array}
\end{eqnarray}
where the objective is the negative MGGD log-likelihood as described in Section \ref{gcovnex_mggd}, and $E$ is the edge set of a known graph $G(V,E)$ associated with the sparsity of $\bSigma^{-1}$. Unfortunately, the above minimization is not convex in $\bSigma$ or $\bSigma^{-1}$. Nor is it geodesically convex in it. Indeed, the sparsity constraints are not preserved by the positive definite geodesic in (\ref{eq:geodesic}). Thus, it is not clear whether its global minima can be found in an efficient manner.

In what follows, we propose a simple trick to ``convexify'' the optimization in many interesting cases of MGGD. In particular, we assume the $G$ is chordal and represent $\bSigma^{-1}$ with its Cholesky factor $\bC$. Due to Proposition \ref{prop:chol}, a unique chordal decomposition exists and we obtain
\begin{eqnarray}\label{chol-opt}
\begin{array}{ll}
  \min_{\bC\in \sY} & \sum_{i=1}^{n}\rho\(\|\bC^T\bz_i\|^2\)-2n\sum_{j=1}^p\log(\bC_{j,j})\\
 {\mathrm{s.t.}} & \left[\bC\right]_{ij}=0,\quad (i,j)\notin E,
\end{array}
\end{eqnarray}
where
\begin{eqnarray}
\sY=\left\{\bC\in\mathbb{R}^{p\times p}:
\begin{array}
 {l}\bC_{i,i}>0 \\
 \text{$\bC_{i,j}=0$ for all $i<j$}
\end{array}
\right\}
\end{eqnarray}
and we have used
\begin{eqnarray}
 n\log\det\(\(\bC\bC^T\)^{-1}\)=-2n\sum_{j=1}^p\log(\bC_{j,j}).
\end{eqnarray}
 The following theorem characterizes the properties of this optimization.
 \begin{theorem}\label{thm:banded}
 (a) Assume that $\rho(x)$ is continuous in $(0,\infty)$, nondecreasing and $\rho\(x^2\)$ is convex, the objective of (\ref{chol-opt})  is strictly convex.
 (b) When $\mathrm{sp}\{\bz_1,\bz_2,\cdots,\bz_n\}=\mathbb{R}^p$,  a minimizer to (\ref{chol-opt}) exists.
\end{theorem}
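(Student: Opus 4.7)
The plan splits naturally according to the two parts of the theorem.

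For part (a), I would decompose the objective in~(\ref{chol-opt}) as $F_1(\bC)+F_2(\bC)$ with $F_1(\bC)=\sum_{i=1}^n \rho(\|\bC^T\bz_i\|^2)$ and $F_2(\bC)=-2n\sum_{j=1}^p \log \bC_{jj}$, and analyze each piece. Convexity of $F_1$ follows from the standard composition rule: $\bC \mapsto \|\bC^T\bz_i\|$ is convex, being the Euclidean norm composed with the linear map $\bC\mapsto\bC^T\bz_i$, while the assumption that $\rho(x^2)$ is convex, together with monotonicity of $\rho$, implies that $h(y):=\rho(y^2)$ is convex and nondecreasing on $[0,\infty)$. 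Composing $h$ with $\|\bC^T\bz_i\|$ gives $\rho(\|\bC^T\bz_i\|^2)$ convex in $\bC$, and summing over $i$ preserves convexity. For $F_2$, strict convexity of $-\log$ on $(0,\infty)$ makes $F_2$ strictly convex in the diagonal coordinates $(\bC_{11},\ldots,\bC_{pp})$; since these coordinates are always free on $\sY$, the sum $F_1+F_2$ is strictly convex in any direction that perturbs some diagonal entry.

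The subtle point of (a) is strict convexity along perturbations that touch only off-diagonal entries, where $F_2$ contributes nothing and strictness must come from $F_1$. Here one uses that $\|\bC^T\bz_i\|^2=\bz_i^T\bC\bC^T\bz_i$ is a nondegenerate quadratic form in the columns of $\bC$ once we aggregate over data $\bz_i$ with full span, and composition with $h$ then promotes this to strict convexity of $F_1$ in the off-diagonal directions. I expect this to be the main obstacle in part (a), and to implicitly require the spanning hypothesis that is made explicit in part (b).

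For part (b), the plan is coercivity combined with the Weierstrass theorem. Let $\mathcal{F}:=\sY\cap\{\bC:[\bC]_{ij}=0\text{ for }(i,j)\notin E\}$. The objective is continuous on $\mathcal{F}$, so it suffices to show that it tends to $+\infty$ along any sequence $\bC^{(k)}\in\mathcal{F}$ that escapes every compact subset of $\mathcal{F}$. The first escape mode has some $\bC_{jj}^{(k)}\to 0^+$: then $-2n\log\bC_{jj}^{(k)}\to+\infty$, while the $\rho$ terms remain bounded below by continuity and monotonicity of $\rho$. The second escape mode has $\|\bC^{(k)}\|_F\to\infty$: by the spanning hypothesis, $\bS:=\sum_i \bz_i\bz_i^T\succ 0$, so $\sum_i \|\bC^T\bz_i\|^2=\trace(\bC\bS\bC^T)\geq \lambda_{\min}(\bS)\|\bC\|_F^2\to\infty$, forcing some $\|\bC^{(k)T}\bz_i\|^2$ to grow without bound and the $\rho$ sum to $+\infty$ by monotonicity.

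The main obstacle is the sub-case of the second mode in which a diagonal entry $\bC_{jj}^{(k)}$ itself tends to $+\infty$, so that $-2n\log\bC_{jj}^{(k)}$ drifts to $-\infty$ and competes with the growth of the $\rho$ term. To close this gap, I would argue that the quadratic lower bound on $\sum_i \|\bC^T\bz_i\|^2$ in terms of $\|\bC\|_F^2$, together with convexity and monotonicity of $h$, forces $\sum_i \rho(\|\bC^T\bz_i\|^2)$ to grow faster than the merely logarithmic decay of the diagonal contribution. Once coercivity holds in both modes, continuity together with the Weierstrass theorem produces a minimizer in $\mathcal{F}$, completing part (b).
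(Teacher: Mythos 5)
Your part (b) is essentially the paper's own proof: the same coercivity-plus-Weierstrass scheme, the same two escape modes, the same use of the spanning hypothesis through $\bS=\sum_i\bz_i\bz_i^T\succ 0$ to get $\sum_i\|\bC^T\bz_i\|^2\geq\lambda_{\min}(\bS)\|\bC\|_F^2$, and the same resolution of the growing-diagonal sub-case, namely that convexity of $h(y)=\rho(y^2)$ yields at-least-linear growth of the $\rho$-sum in $\|\bC\|_F$, which dominates the logarithmic drift of $-2n\log\bC_{jj}$. Two small repairs there: near $\bC_{jj}\to 0^+$ the lower bound on the $\rho$-terms does \emph{not} follow from ``continuity and monotonicity'' alone ($\rho(x)=\log x$ is continuous and nondecreasing yet unbounded below near $0$); it follows from convexity of $\rho(x^2)$, which is exactly how the paper obtains $\liminf_{x\to 0}\rho(x)\geq 2\rho(1)-\rho(4)$. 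Similarly, ``the $\rho$ sum tends to $+\infty$ by monotonicity'' is insufficient on its own, since a nondecreasing $\rho$ can be bounded; the convexity-based growth bound you invoke in your last paragraph is what actually closes both sub-cases, as in the paper. Your convexity argument for the $\rho$-part of (a) --- $h$ convex nondecreasing composed with the convex map $\bC\mapsto\|\bC^T\bz_i\|$ --- is the paper's argument (triangle inequality plus convexity of $\rho(x^2)$) rewritten as a composition rule, and is fine.

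The genuine gap is your strictness mechanism in (a). You correctly flag that strictness along zero-diagonal directions must come from $F_1$, but the proposed fix --- nondegeneracy of the aggregated quadratic form, ``promoted'' by composition with $h$ --- fails precisely at the boundary case the theorem is meant to cover: for MGGD with $\beta=\tfrac{1}{2}$ one has $\rho(x)=x^{1/2}/2$, so $h(y)=y/2$ is linear and $F_1(\bC)=\tfrac{1}{2}\sum_i\|\bC^T\bz_i\|$. Each term is positively homogeneous of degree one; the square root destroys the strict convexity of the quadratic form, and $F_1$ is affine along any segment on which $\bC_1^T\bz_i$ and $\bC_2^T\bz_i$ are positively parallel for every $i$ --- nondegeneracy of $\bS$ does not exclude such directions. (Notably, the paper does not attempt this step either: its proof attributes strictness entirely to the strictly convex negative logarithms and only proves plain convexity of the $\rho$-terms, so it is silent on exactly the directions you worry about; indeed, with $\rho$ merely nondecreasing the claim can fail outright --- a constant $\rho$ makes the objective independent of the off-diagonal entries --- so some strict monotonicity of $\rho$ is implicitly needed, as it holds for MGGD.) A correct repair does use the spanning hypothesis, as you guessed, but through the triangular structure rather than the quadratic form: if $\bDelta=\bC_2-\bC_1$ has zero diagonal and the $\rho$-sum is affine along the segment, then $\bDelta^T\bz_i=(c_i-1)\,\bC_1^T\bz_i$ with $c_i\geq 0$ for all $i$, i.e.\ every $\bC_1^T\bz_i$ is an eigenvector of $\bA=\bDelta^T\bC_1^{-T}$; but $\bA$ is strictly upper triangular, hence nilpotent, and a nilpotent matrix admitting a spanning set of eigenvectors is zero, forcing $\bDelta=\bm{0}$. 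Without this (or an equivalent argument), your part (a) remains incomplete in the off-diagonal directions you yourself identified.
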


\begin{proof}
Negative logarithms are strictly convex, and it remains to show that
\begin{align}
&\rho\(\left\|\(\frac{\bC_1+\bC_2}{2}\)^T\bz_i\right\|^2\)\nonumber\\
&\qquad\leq\rho\(\(\frac{\|\bC_1^T\bz_i\|+\|\bC_2^T\bz_i\|}{2}\)^2\)\nonumber\\
&\qquad\leq\frac{1}{2}\rho\(\|\bC_1^T\bz_i\|^2\)+\frac{1}{2}\rho\(\|\bC_2^T\bz_i\|^2\)\end{align}
where we have used the triangle inequality and the convexity of $\rho\(x^2\)$.

%

To prove the existence of minimizer, we need to prove that for $L_1(\bC)=\sum_{i=1}^{n}\rho\(\|\bC^T\bz_i\|^2\)-2n\sum_{j=1}^p\log(\bC_{j,j})$, $L_1(\bC)\rightarrow\infty$ as $\|\bC\|_F\rightarrow\infty$ or   the smallest eigenvalue of $\bC$ goes to $0$. When $\|\bC\|_F\rightarrow\infty$, since  $\mathrm{sp}\{\bz_1,\bz_2,\cdots,\bz_n\}=\mathbb{R}^p$ and $\bC$ is full rank, there exists $c_2$ such that $\sum_{i=1}^{n}\|\bC^T\bz_i\|^2 \geq c_2 \|\bC\|_F^2$. Consider that $\rho(x^2)$ is convex, there exists a constant $c_1$ and $C_1$ such that when $\frac{1}{n}\sum_{i=1}^{n}\|\bC^T\bz_i\|^2>C_1$,
\begin{align*}
&\sum_{i=1}^{n}\rho(\|\bC^T\bz_i\|^2)\geq n\rho(\frac{1}{n}\sum_{i=1}^{n}\|\bC^T\bz_i\|^2)
\\\geq& n c_1 \sqrt{\frac{1}{n}\sum_{i=1}^{n}\|\bC^T\bz_i\|^2}
\geq \sqrt{n} c_1 c_2\|\bC\|_F.
\end{align*}
Therefore as $\|\bC\|_F\rightarrow\infty$, $L_1(\bC)\rightarrow\infty$.

 When the smallest eigenvalue of $\bC$ goes to $0$, we have $\det(\bC)\rightarrow 0$ and $\sum_{j=1}^p\log(\bC_{j,j})=\log\det(\bC)\rightarrow -\infty$. In another aspect, since $\rho(x^2)$ is convex, $\lim\inf_{x\rightarrow 0}\rho(x)$ exists (by convexity it is larger than $2\rho(1)-\rho(4)$). Combining it with the fact that $\rho(x)$ is nondecreasing, $L_1(\bC)\rightarrow\infty$ as the smallest eigenvalue of $\bC$ goes to $0$,
 and the existence of the minimizer of $L_1(\bC)$ is proved.
\end{proof}

The theorem shows that, under suitable conditions, any local minimizer of (\ref{chol-opt}) is globally optimal. 
This holds for any ES distribution with $\rho$ satisfying the assumptions. In particular, an important special case is chordal graphical models in MGGD: \begin{corollary}
If $\mathrm{sp}\{\bz_1,\bz_2,\cdots,\bz_n\}=\mathbb{R}^p$, under chordal graphical models the ML estimator for MGGD exists and unique for all $\beta\geq 0.5$.
 \end{corollary}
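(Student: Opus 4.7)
The plan is to verify that the MGGD case $\rho(x) = x^\beta/2$ with $\beta \geq 1/2$ satisfies the hypotheses of Theorem~\ref{thm:banded}, and then translate the resulting existence and uniqueness in the Cholesky parametrization~(\ref{chol-opt}) back to the original constrained problem~(\ref{eq:ggm}) using Proposition~\ref{prop:chol}.

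First I would check the three conditions on $\rho$. Continuity of $\rho(x)=x^\beta/2$ on $(0,\infty)$ is immediate, and monotonicity (nondecreasing) holds for every $\beta>0$. The only nontrivial condition is that $\rho(x^2)=x^{2\beta}/2$ be convex in $x\in(0,\infty)$. Since $x^{2\beta}$ has second derivative $2\beta(2\beta-1)x^{2\beta-2}$, convexity is equivalent to $2\beta\geq 1$, i.e., $\beta\geq 1/2$. This is exactly the threshold in the corollary.

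Next I would invoke Theorem~\ref{thm:banded}(a) to conclude that the objective $L_1(\bC)$ of~(\ref{chol-opt}) is strictly convex on $\sY$. The strictness here does not require strict convexity of $\rho(x^2)$ (which fails at $\beta=1/2$, where $\rho(x^2)$ is linear); it is supplied by the strict convexity of the $-2n\sum_j\log(\bC_{j,j})$ term. Combined with the spanning assumption $\mathrm{sp}\{\bz_1,\ldots,\bz_n\}=\mathbb{R}^p$, Theorem~\ref{thm:banded}(b) gives existence of a minimizer $\bC^\star\in\sY$ that is $G$-sparse. Strict convexity then forces this minimizer to be unique.

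Finally I would transfer this conclusion back to~(\ref{eq:ggm}). By Proposition~\ref{prop:chol}, since $G$ is chordal (and, after a relabeling if necessary, we may assume the natural perfect order), the map $\bC\mapsto \bSigma^{-1}=\bC\bC^T$ is a bijection between $G$-sparse matrices in $\sY$ and $G$-sparse positive definite matrices $\bSigma^{-1}$. The identity $n\log\det\((\bC\bC^T)^{-1}\)=-2n\sum_j\log(\bC_{j,j})$ together with $\bz_i^T\bSigma^{-1}\bz_i=\|\bC^T\bz_i\|^2$ shows that the objectives of~(\ref{eq:ggm}) and~(\ref{chol-opt}) coincide under this bijection. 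Hence the unique $\bC^\star$ yields a unique minimizer $\bSigma^\star=(\bC^\star(\bC^\star)^T)^{-1}$ of~(\ref{eq:ggm}).

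The main obstacle I anticipate is really just bookkeeping: one has to be careful that strict convexity of $L_1$ does not need strict convexity of $\rho(x^2)$ at the boundary value $\beta=1/2$, since at that exponent the MGGD term is only convex (and the log-barrier is what saves strictness). Apart from that subtlety, the corollary is a direct specialization of Theorem~\ref{thm:banded} via the chordal Cholesky bijection of Proposition~\ref{prop:chol}.
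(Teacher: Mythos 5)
Your proposal is correct and follows essentially the same route as the paper: the corollary is stated there as a direct specialization of Theorem~2, obtained by checking that $\rho(x)=x^\beta/2$ with $\beta\geq 1/2$ makes $\rho(x^2)$ convex, and passing between~(8) and~(9) via the chordal Cholesky bijection of Proposition~4. Your remark that strictness at the boundary $\beta=1/2$ is supplied by the $-2n\sum_j\log(\bC_{j,j})$ barrier rather than by $\rho(x^2)$ matches the paper's own proof of Theorem~2(a) verbatim.
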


 Unfortunately, the recent result in \cite{abramovich2012expected} on $\rho(x)=\log(x)/2p$ does not satisfy our  conditions and requires a different analysis. And there is no detailed proof to support the claim in~\cite{abramovich2012expected} that this banded Tyler estimators converge to unique fixed points.

%

When the condition in Theorem~\ref{thm:banded} holds, the objective function is convex with respect to $\bC$. Therefore, it can be numerically minimized via any general convex optimization solver. For example, in the MGGD case with $\beta\geq 0.5$ the problem can be expressed as
 \begin{eqnarray}
\begin{array}{ll}
  \min_{\bC\in \sY,\bf{t}} & \sum_{i=1}^{n}|t_i|^{2\beta}-2n\sum_{j=1}^p\log(\bC_{j,j})\\
 {\mathrm{s.t.}} & t_i\geq \|\bC\bz_i\|\\
 & \left[\bC\right]_{ij}=0,\quad (i,j)\notin E
\end{array}
\end{eqnarray}
where $\bf{t}$ are auxiliary variables. This formulation with second order cone constraints can be easily solved using the popular CVX package \cite{grant2008cvx}.

Alternatively, the optimization can be addressed using a majorization - minimization (MM) technique, e.g. \cite{Arslan2004}. The method begins with an initial estimate $\hat\bSigma_0\in\PSD(p)$ and updates it according the following iterations
\begin{eqnarray}\label{iterat}
 \hat\bSigma_{m+1}=\arg \left\{
\begin{array}{ll}
  \min_{\bSigma} & h(\bSigma,\bSigma_m)\\
 {\mathrm{s.t.}} & \left[\bSigma^{-1}\right]_{ij}=0,\quad (i,j)\notin E
\end{array}\right.
\end{eqnarray}
where
\begin{equation}\label{eq:auxillary}
h(\bSigma,\bSigma_m)=\sum_{i=1}^{n}u(\bz_i^T\bSigma_m^{-1}\bz_i) \bz_i^T\bSigma^{-1}\bz_i+n\log\det(\bSigma)+C,
\end{equation}
$u(x)=\rho'(x)$ and $C$ is chosen such that ${h(\bSigma_m,\bSigma_{m})= L_0(\bSigma_m)}$. Since $\rho''(x)$ is continuous and nonnegative, we have
\begin{eqnarray}
 L_0(\bSigma_{m+1})\geq h(\bSigma,\bSigma_{m})\geq h(\bSigma_m,\bSigma_{m})=L_0(\bSigma_m)
\end{eqnarray}
and $L_0(\bSigma_m)$ converges as $m\rightarrow\infty$. Each iteration step in (\ref{iterat})-(\ref{eq:auxillary}) can be interpreted as a Gaussian graphical model optimization (the weights $u(\cdot)$ are constant with respect to the optimization variable). These minimizations have a simple closed form solution when $G(V,E)$ is chordal (see appendix). Thus, the proposed technique is very efficient for implementation in practice.

Finally, it is worth mentioning that our framework can also be extended to structure learning in MGGD graphical models. Structure learning, also known as covariance selection, considers the estimation of an inverse covariance which is known to be sparse but the sparsity pattern itself is unknown \cite{dempster1972covariance,Bickel2008,Rothman2010}. Adding a sparsity constraint usually destroys the convexity. Instead, the modern approach relies on a convex relaxation based on an L1 norm penalty. In our context, the problem is convex in the Cholesky factor rather than in the inverse covariance itself. This leads to the following convex minimization
 \begin{eqnarray}
  \min_{\bC\in \sY}  \sum_{i=1}^{n}\rho(\|\bC\bz_i\|^2)-2n\sum_{j=1}^p\log([\bC]_{j,j})+\lambda\|\bC\|_1\end{eqnarray}
where $\lambda$ is a regularization parameter, and $\|\bC\|_1$ is a matrix version of the L1 norm, namely a sum over the absolute values of the elements in $\bC$. It is important to emphasize that this approach is only applicable to chordal graphical models and it assumes that the perfect order of the variables is known a priori. Recent developments in high dimensional covariance estimation provide data-driven methods for identifying this order and structure \cite{rutimann2009high,rolfs2012natural}. We leave this topic as a possible direction for future research.

Furthermore, our methodology can be easily extended to joint estimation of the covariance and the centering parameter $\mu$. For this purpose, consider the optimization of
\begin{eqnarray}
\begin{array}{ll}
  \min_{\bC\in \sY, \mu} & \sum_{i=1}^{n}\rho\(\|\bC^T\bz_i-\bx\|^2\)-2n\sum_{j=1}^p\log(\bC_{j,j})\\
 {\mathrm{s.t.}} & \left[\bC\right]_{ij}=0,\quad (i,j)\notin E
\end{array}\!\!\!\!\!\!\!\!\!\!\!\!\!\!\!\!\nonumber\\
\end{eqnarray}
where $\bx=\bC^T{\bm{\mu}}$
When the conditions in Theorem~\ref{thm:banded}(a) hold, the objective function is jointly convex with respect to $(\bC, \bx)$, therefore any of its local minimizer is also its global minimizer. Its algorithm can be similarly addressed using a majorization - minimization (MM) technique as in~\eqref{iterat}-\eqref{eq:auxillary}.
%
%

\section{Numerical results}\label{simulations}
In this section, we present results of numerical simulations. The purpose of these simulations is three fold: to validate our theoretical results using synthetic data, to provide insight on a few open theoretical questions using synthetic data and to demonstrate the usefulness of our theoretical models in a real world setting.

The first experiment considered inverse covariance estimation using synthetic data generated from a known MGGD distribution with $\beta\geq 0.5$ and a chordal structure. The theory in Section \ref{convex_mggd} shows that, in this case, the MLE can be formulated as a convex optimization. To verify this result, we generated $n$ i.i.d. realizations of an MGGD with $p=10$, $\beta=0.5$ (corresponding to a classical multivariate Laplacian distribution), and a banded inverse covariance of width $b=4$. In particular, we used a Toeplitz inverse covariance with $1.0$ diagonal elements and $0.4$ off-diagonal elements within the main band. We estimated the unknown covariance using five estimators:
\begin{itemize}
 \item G: A Gaussian MLE with no prior knowledge of the structure, corresponding to the classical sample covariance.
 \item BG: A Gaussian banded MLE as detailed in the appendix.
 \item MGGD: An MGGD MLE with no prior knowledge of the structure via $30$ iterations of (\ref{eq:mestimator_covariance}).
 \item BMGGD1: An MGGD banded MLE via $30$ iterations of (\ref{iterat}) and the subroutine in the appendix. This estimator is initialized with $\hat\bSigma=\bI$.
 \item BMGGD2: An MGGD banded MLE via $30$ iterations of (\ref{iterat}) and the subroutine in the appendix. This estimator is initialized with $\hat\bSigma=\bSigma$ (which is clearly impossible in practice).
\end{itemize}
In all estimators above we use $30$ as the number of iterations since in our simulations the fixed point algorithms~\eqref{eq:mestimator_covariance} and \eqref{iterat} converge well after $30$ iterations.

Figure 1 shows the normalized mean squared Frobenius error in the covariance averaged over $10000$ independent simulations as a function of the number of samples $n$. It is easy to see the performance advantage of banded MGGD estimator. As expected, BMGGD1 and BMGGD2 converge to the identical fixed points irrespective of their initial condition.

The second experiment is very similar to the first except that $\beta=0.2$. Our theory assumes $\beta\geq 0.5$ and therefore does not hold for this value of shape parameter. Yet, according to \cite{abramovich2012expected} banded Tyler estimators do converge to unique fixed points, and these can be interpreted as the limit of MGGD when $\beta\rightarrow 0$. Our proposed fixed point iteration in (\ref{iterat}) can be implemented with a small $\beta$. Therefore, it is interesting to examine its performance and we repeat the first experiment with $\beta=0.2$. The results are presented in Figure 2. Interestingly, BMGGD1 and BMGGD2 converged to identical fixed points irrespective of their initial condition. Thus, although we have no proof for this behavior, we conjecture that, in practice, the iteration can also be used for all MGGDs.

The third experiment focused on non-chordal structures. Our theory assumes a chordal sparsity pattern and it is interesting to see whether this assumption is indeed critical. Thus, we repeated the first experiment associated with MGGD $\beta=0.5$,  but replaced the banded inverse covariance with a loopy structure. In particular, we constructed a two dimensional grid graph of size $p=3^2$ with the edges $(3,1)$, $(5,1)$, $(6,1)$, $(7,1)$, $(8,1)$,  $(9,1)$,  $(4,2)$,  $(6,2)$,  $(7,2)$,  $(8,2)$,  $(9,2)$,
  $(4,3)$,  $(5,3)$,   $(7,3)$,  $(8,3)$,   $(9,3)$,  $(6,4)$,   $(8,4)$,  $(9,4)$,   $(7,5)$,  $(9,5)$,
    $(7,6)$,  $(8,6)$, and  $(9,7)$. All the non-zero off diagonal elements were assigned a constant value small enough to ensure that $\bSigma^{-1}$ will be well conditioned. In contrast to the chordal case, MLE in general Gaussian Graphical models in does not satisfy a closed form solution. Instead, we solved  (\ref{eq:ggm}) using the CVX  optimization package \cite{grant2008cvx}. The latter was used for implementing BG, and for the inner solution in each iteration of BMGGD1 and BMGGD2. The results averaged over $400$ simulations\footnote{Due to CVX these simulations are highly time consuming.} are provided in Figure 3. Here too, the iterations converged to identical solutions and seem to be independent of their initial conditions.

The fourth experiment addressed the practical use of the BMGGD1 estimator in a real world example. Following \cite{Rothman2010} we considered the SONAR dataset from the UCI machine learning data repository. This dataset has 111 spectra from metal cylinders and 97 spectra from rocks, where each spectrum has 60 frequency band energy measurements. Quadratic discriminant test is used to classify the metals and the rocks. It requires the estimation of the covariance in both classes, and previous work demonstrated the advantage of BG over the classical sample covariance and the naive diagonal estimate. We repeated the experiment step by step and added BMGGD1. Specifically, we chose $\beta$ as the parameter within $\{0.5,0.6,\cdots,0.9,1.0\}$ that maximizes the MGGD likelihood, and used the same band which was used by BG (selected via 10 random splits with $1/3$ of the data for training and the validation likelihood). For the QDA test we applied the covariance of MGGD, which is $c(\beta)\bSigma$, where  $c(\beta)=2^{\frac{1}{\beta}}\frac{\Gamma(\frac{p+2}{2\beta})}{p\Gamma(\frac{p}{2\beta})}$.  The test errors over a standard leave-one-out cross validation are provided in Table I.  These results remained stable over different randomizations, and demonstrate the advantage of the proposed BMGGD1 framework.

\begin{table}[h]
\caption{Test errors in SONAR dataset.}
\begin{center}
\begin{tabular}{|c|c|c|c|}
\hline
Sample covariance & Naive Bayes & BG & BMGGD1 \\
\hline
$24.0\%$ &  $32.7\%$ & $15.4\%$ & $13.5\%$\\
\hline
\end{tabular}
\end{center}
\end{table}%

\begin{figure}
\center
\includegraphics[width=.5\textwidth]{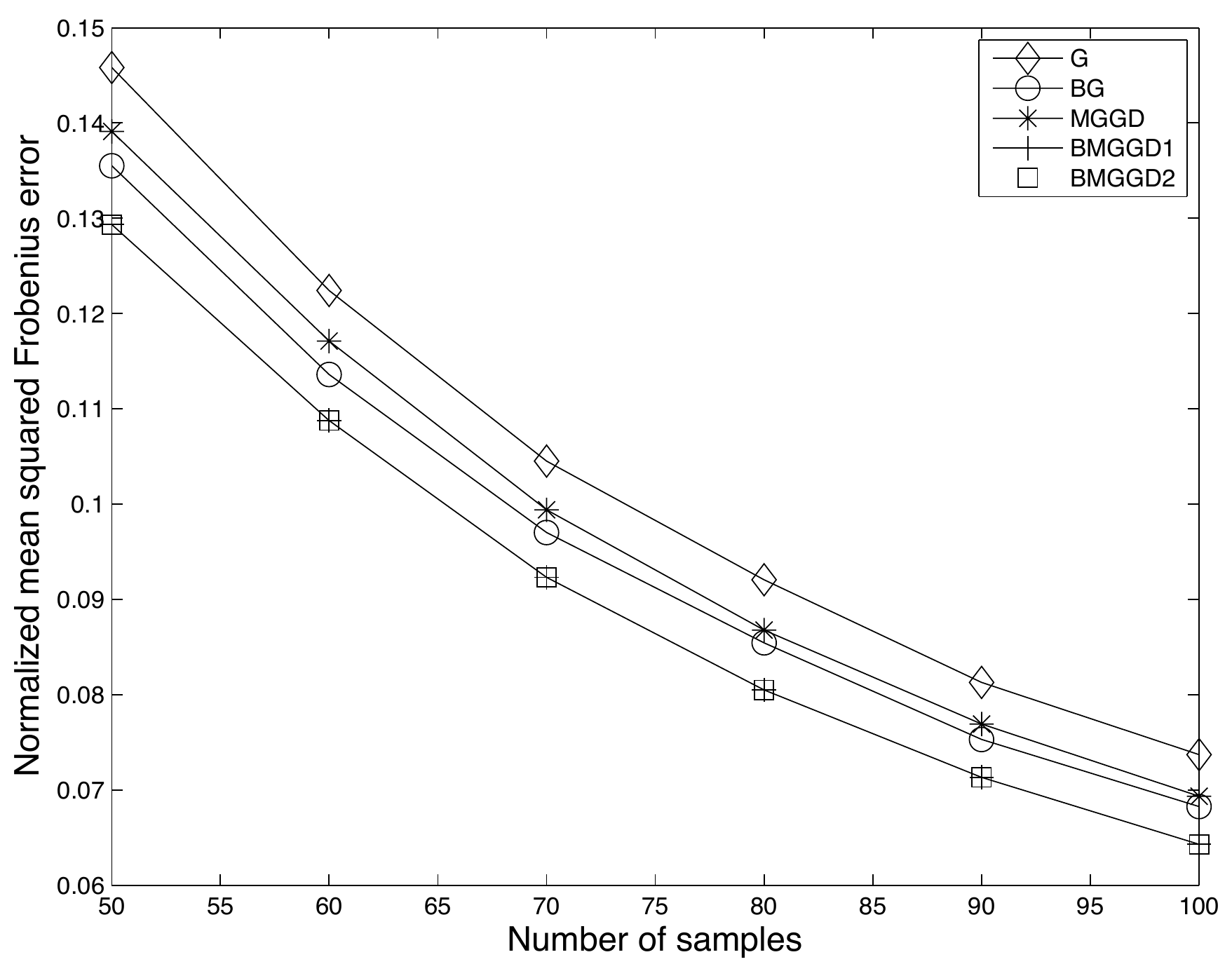}
\caption{MGGD with $\beta=0.5$ and a banded inverse covariance .}\label{fig1}
\end{figure}

\begin{figure}
\center
\includegraphics[width=.5\textwidth]{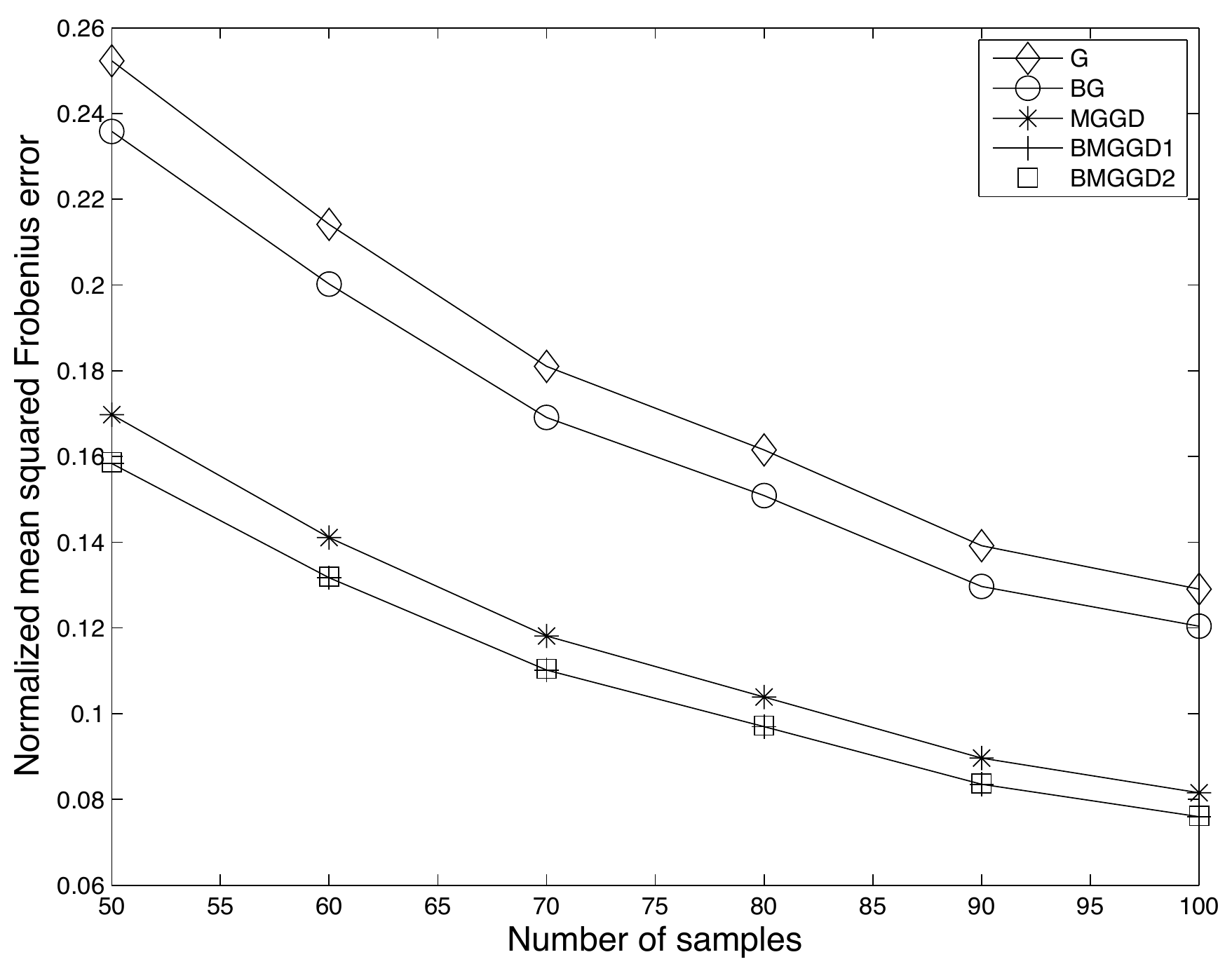}
\caption{MGGD with $\beta=0.2$ and a banded inverse covariance .}\label{fig2}
\end{figure}

\begin{figure}
\center
\includegraphics[width=.5\textwidth]{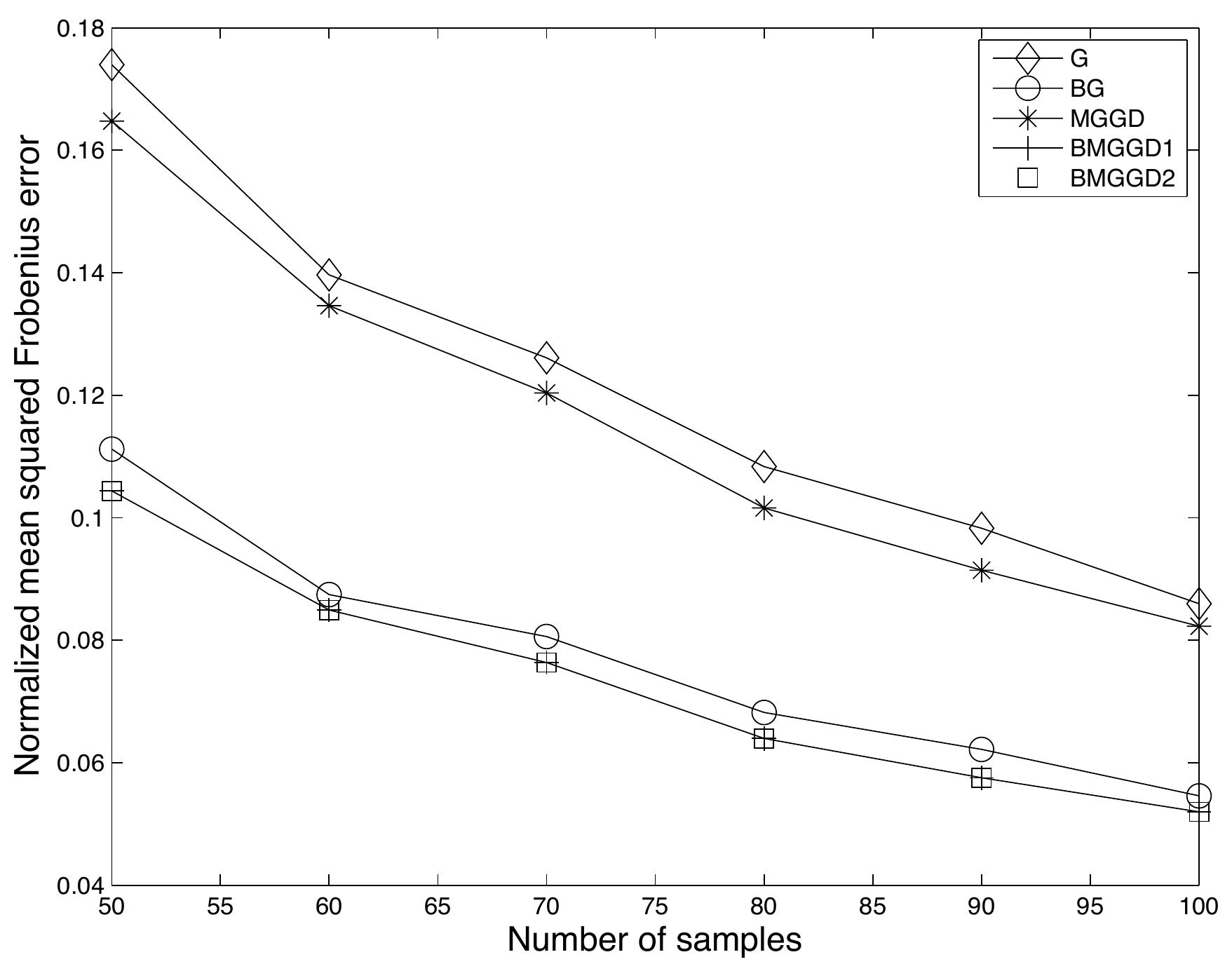}
\caption{MGGD with $\beta=0.5$ and a non-chordal graphical model.}\label{grid}
\end{figure}

\section{Discussion}\label{conclusion}
In this paper, we consider covariance estimation in the multivariate
generalized Gaussian distribution (MGGD). We proved that the MGGD negative log-likelihood is geodesically convex for $\beta>0$. In the sparsely constrained case, we proved that a simple change of variables can transform it into a convex function as long as $\beta\geq0.5$ and the underlying graph is chordal. This means that any local solution of these minimization is globally optimal and the problems can be solved using standard descent methods. In practice, we observed this behavior also for smaller values of $\beta$. This agrees with a similar result on banded Tyler methods which can be interpreted as $\beta\rightarrow 0$. An interesting direction for future work is a rigorous analysis of these phenomenon when $0<\beta<0.5$. Another direction is relaxing the chordal assumption on the sparsity pattern. Here too, our numerical experience suggests that simple descent methods converge to the global solution and are independent of their initial conditions. Finally, as mentioned above, it in interesting to examine the problem of structure learning in MGGDs via sparsity enforcing priors.

\appendix
In this appendix, we review a simple closed form solution for the MLE in chordal (decomposable) Gaussian graphical models \cite{lauritzen1996graphical,Wiesel_decomposable2010,andersen1995linear}. The problem is
 \begin{eqnarray}\label{ggm}
\begin{array}{ll}
  \min_{\bSigma} & \sum_{i=1}^n\alpha_i\bz_i^T\bSigma^{-1}\bz_i+n\log\det{\bSigma}\\\
 {\mathrm{s.t.}} & \left[\bSigma^{-1}\right]_{ij}=0,\quad (i,j)\notin E
\end{array}
\end{eqnarray}
where
\begin{eqnarray}
 \alpha_i=u(\bz_i^T\bSigma_m^{-1}\bz_i)
\end{eqnarray}
are the iteration weights which do not depend on $\bSigma$. Using the chordal Cholesky decomposition $\bSigma^{-1}=\bC\bC^T$, the problem is equivalent to
 \begin{eqnarray}
\begin{array}{ll}
  \min_{\bC} & \sum_{i=1}^n\alpha_i\|\bC^T\bz_i\|^2 - 2 n \sum_{j=1}^p\log ([\bC]_{jj})\\\
 {\mathrm{s.t.}} & \left[\bC\right]_{ij}=0,\quad (i,j)\notin E \text{ or } i>j
\end{array}
\end{eqnarray}
This minimization is completely separable and each column of $\bC$ can be simply obtained by a linear regression. Let $\{\bZ_i\}_{i=1}^p \subset\mathbb{R}^n$ be vectors consists of the $i$-th components of $\sqrt{\alpha_1}\bz_1,\sqrt{\alpha_2}\bz_2,\cdots,\sqrt{\alpha_n}\bz_n$, $J_i=\{i+1\leq k\leq p: (i,k)\in E\}$, and assume that in the linear regression of  $\bZ_i$ with respect to $\{\bZ_j\}_{j\in J_i}$, the parameter on $\bZ_j$ is $\beta(i,j)$, and standard error is $r_i$. Then
\begin{eqnarray}
 [\bA]_{ij}&=&\begin{cases}1, &\text{when $i=j$} \\
 \beta(j,i), &\text{when $i>j$ and $(i,j)\in E$}\\
 0, &\text{otherwise}.
 \end{cases}\\
 \bD&=&\diag(r_1,r_2,\cdots,r_p),
\end{eqnarray}
 and finally $\bC=\bD\bA$.

\bibliographystyle{abbrv}
\bibliography{bib-rrp}
\end{document}